\newtheorem{lem}{Lemma}
\newtheorem{ass}{Assumption}
\newtheorem{theorem}{Theorem}
\def\mb{\mathbf}
\def\mc{\mathcal}
 \let\old@ps@headings\ps@headings
 \let\old@ps@IEEEtitlepagestyle\ps@IEEEtitlepagestyle
 \def\confheader#1{%
 \def\ps@headings{%
 \old@ps@headings%
 \def\@oddhead{\strut\hfill#1\hfill\strut}%
 \def\@evenhead{\strut\hfill#1\hfill\strut}%
 }%
 \def\ps@IEEEtitlepagestyle{%
 \old@ps@IEEEtitlepagestyle%
 \def\@oddhead{\strut\hfill#1\hfill\strut}%
 \def\@evenhead{\strut\hfill#1\hfill\strut}%
 }%
\ps@headings%
 }
\begin{document}
\title{\bf
Robust-to-Noise Algorithms for Distributed Resource Allocation and Scheduling }

\author{\IEEEauthorblockN{ Mohammadreza Doostmohammadian}
\IEEEauthorblockA{
\textit{Faculty of Mechanical Engineering}, \\
\textit{Semnan University}, Iran, \\ \texttt{doost@semnan.ac.ir}}
\and
\IEEEauthorblockN{Alireza Aghasi}
\IEEEauthorblockA{\textit{Electrical Engineering and Computer Science Department}, \\ \textit{ Oregon State University},
USA, \\
\texttt{alireza.aghasi@oregonstate.edu}}
}

\maketitle

\begin{abstract}
	Efficient resource allocation and scheduling algorithms are essential for various distributed applications, ranging from wireless networks and cloud computing platforms to autonomous multi-agent systems and swarm robotic networks. However, real-world environments are often plagued by uncertainties and noise, leading to sub-optimal performance and increased vulnerability of traditional algorithms. This paper addresses the challenge of robust resource allocation and scheduling in the presence of noise and disturbances.
	The proposed study introduces a novel sign-based dynamics for developing robust-to-noise algorithms distributed over a multi-agent network that can adaptively handle external disturbances. Leveraging concepts from convex optimization theory, control theory, and network science the framework establishes a principled approach to design algorithms that can maintain key properties such as resource-demand balance and constraint feasibility. Meanwhile, notions of uniform-connectivity and versatile networking conditions are also addressed.
	
\keywords  distributed optimization, graph theory, distributed resource allocation, random noise, optimality gap
\end{abstract}

\section{Introduction} \label{sec_intro}
The advent of cloud-computing, Internet-of-Things (IoT), and parallel processing has motivated distributed algorithms in many disciplines including distributed estimation and filtering \cite{acc13,das2015distributed,deghat2019detection,icassp13}, distributed target tracking \cite{ennasr2016distributed,tase}, distributed robot coordination and coverage control \cite{cortes2004coverage,msc09,sayyaadi2010distributed}, and distributed machine learning \cite{csl2021,xin2020decentralized}. In such distributed setups the information shared over the multi-agent network might be subject to noise and disturbances. Many works in the literature are devoted to designing resilient and robust algorithms to compensate for the effect of noise and ensure convergence in noisy environments. One of the applications of such resilient distributed algorithms is in distributed scheduling and allocation of resources over a multi-agent network. Distributed resource allocation (DRA) refers to the problem of optimally assigning (and distributing) certain resources among a group of entities while (i) meeting the output demand and preserving the resource-demand balance (referred to as constraint feasibility), and (ii) minimising the utility costs modelled via some objective functions. In this paper, we are interested in designing DRA algorithms robust to noise and disturbances. This benefits the applications in noisy setups and environments and the results can be further extended to other distributed setups such as distributed estimation and fault detection.

\subsection{Literature Review and Contributions}
Sign-based algorithms are known to be inherently resilient to impulsive noise and are used in many setups, including diffusion strategy for distributed learning
\cite{zayyani2016distributed}, consensus algorithms \cite{stankovic2020nonlinear,taes,stankovic2019robust}, and distributed estimation \cite{sadigh2022proportionate,babarsad2019analytic,zandi2022diffusion} among others. This motivates this paper to adopt sign-based dynamics to solve the DRA problem. Similarly in the literature, some works are devoted to designing robust algorithms in similar resource allocation setups. For example, robust economic dispatch problem with quadratic energy cost objectives is considered in \cite{baranwal2020robust,kar2012distributed}. Network resource allocation under uncertainty is solved in \cite{doan2020distributed} under decaying step-sizes and all-time connected networks. The work \cite{ferguson2023robust} designs game-theoretic solutions robust to failures and stubborn agents. Robust and secure solutions in wireless sensor network setups with applications in communication networks are considered in \cite{xu2022robust}. In \cite{ding2021differentially} the authors propose deviation tracking strategies to tackle faulty information from adversary agents and secure privacy. Attack-resilient DRA solutions are also proposed in \cite{turan2020resilient}. A perturbation-based method to solve DRA without closed-form expression of the cost function is proposed in \cite{ramirez2016gradient}. Asynchronous consensus-based protocols for DRA over time-varying all-time connected networks are discussed in \cite{alaviani2021reciprocity}. General
nonlinear models for DRA that span a wide range of applications are proposed in \cite{ecc22,ccta22}. Our paper proposes robust-to-noise algorithms which are all-time feasible up-to certain point (preserve the balance between the resource-demand) over uniformly-connected versatile networks that may lose connectivity at some times. The solution is not constrained to quadratic costs but may be applied to general non-quadratic costs (even subject to penalty terms). The paper advances the state-of-the-art by introducing sign-based dynamics to make the solution robust to noise. Further, we improve the convergence rate by modifying the solution dynamics.

\subsection{Applications}
\textbf{Generator Coordination and Energy Management:}
Distributed energy resource management is a control mechanism to keep the balance between the generated power and load demand over the smart grid. This process determines the optimal schedule for operating power generation units within a power system over a specific time horizon. The idea is to locally assign the power rates to a group of generators and coordinate them to maintain the load demand while minimizing the power generation costs
\cite{cherukuri2015distributed,ccta22,cherukuri2017distributed,LCSS23}. This problem is sometimes referred to as the economic dispatch problem.

\textbf{CPU Scheduling:}
CPU scheduling refers to the process of determining which process or task should be allocated to the CPU (Central Processing Unit) at a given point in time. In modern operating systems, multiple processes or tasks may be competing for CPU time simultaneously. CPU scheduling algorithms are used to decide which task to be assigned to which server or computing node. In other words,
CPU scheduling  efficiently allocates the computational tasks across multiple CPUs or servers while minimizing certain processing cost function
\cite{grammenos2023cpu,OJCSYS,rikos2021optimal}.

\textbf{Scheduling Plug-in Electric Vehicle (PEV) charging:}
A group of PEVs needs to select a schedule to charge their internal batteries while minimizing the aggregate electricity costs at all charging stations. The formulation is subject to both vehicle-level constraints
(e.g., physical limitations of the batteries) and grid-wide constraints (maximum deliverable power). The coupling constraint among the group of nodes here is that the provided charging power is equal to the battery loads. The objective function to be minimized represents the sum of overall charging costs at the stations. Both centralized and distributed solutions for this problem are provided in the literature
\cite{falsone2020tracking,mukherjee2016distributed,vujanic2016decomposition,falsone2017dual}.

\subsection{Paper Organization}
Section \ref{sec_fram} establishes the mathematical framework for the problem of resource allocation and scheduling. Section \ref{sec_solution} provides the main proposed dynamics to solve the problem along with the convergence analysis. Section~\ref{sec_sim} presents the simulation results to verify theoretical concepts, and finally, Section~\ref{sec_con} concludes the paper.

\section{The Framework} \label{sec_fram}
The DRA problem is mathematically formulated as follows:
\begin{equation} \label{eq_dra}
	\begin{aligned}
		\displaystyle
		\min_\mb{x}
		~~ & F(\mb{x}) = \sum_{i=1}^{n} f_i(\mb{x}_i) \\
		\text{s.t.} ~~&  \mb{a}^\top \mb{x} = b
	\end{aligned}
\end{equation}
with $x_i \in \mathbb{R}$ as the state (resource) of node $i$,
$\mb{x} = [x_1;\dots;x_n] \in \mathbb{R}^{n}$ as the global state variable (the resource vector), $\mb{a}=[{a}_1;\dots;{a}_n]^\top \in \mathbb{R}^n$ as the state weighting factor and $b   \in \mathbb{R}$ as the demand/load which needs to balance with the sum of weighted resources. All the local cost functions $f_i(x_i):\mathbb{R} \mapsto \mathbb{R}$ are assumed to be strictly convex.

\begin{lem}\label{lem_optimal_solution}
	The problem \eqref{eq_dra} has a unique optimizer $\mb{x}^*$ satisfying the following property:
	\begin{eqnarray}
		\nabla F(\mb{x}^*) = \bm{\varphi}^* \mb{a},
	\end{eqnarray}
	with $\bm{\varphi}^* \in \mathbb{R}$ and $\nabla F(\mb{x}^*) = [\nabla f_1(x^*_1);\dots;\nabla \widetilde{f}_n(x^*_n)]$ as the global gradient at $\mb{x}^*$.
\end{lem}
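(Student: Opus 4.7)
The plan is to establish this via two standard ingredients: (i) existence and uniqueness from strict convexity of $F$ on the affine feasible set, and (ii) the first-order Lagrange/KKT condition for a single equality constraint.

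First, I would argue uniqueness directly: since every $f_i$ is strictly convex, the sum $F(\mb{x})=\sum_i f_i(x_i)$ is strictly convex on $\mathbb{R}^n$. The feasible set $\mc{S}=\{\mb{x}\in\mathbb{R}^n:\mb{a}^\top\mb{x}=b\}$ is an affine hyperplane (assuming $\mb{a}\neq\mb{0}$ so that the constraint is non-degenerate and $\mc{S}\neq\emptyset$), hence non-empty, closed, and convex. A strictly convex function cannot attain its minimum at two distinct points of a convex set, so whenever a minimizer exists it is unique. Existence follows from the standard assumption that each $f_i$ is coercive (or that $F$ is bounded below with compact sub-level sets on $\mc{S}$), so that a minimizing sequence admits a convergent subsequence whose limit lies in $\mc{S}$ by closedness and achieves the infimum by continuity of $F$. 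I would state this as the well-posedness condition under which the lemma is stated.

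Second, I would derive the stationarity condition via the Lagrangian
\begin{equation*}
  L(\mb{x},\varphi)=F(\mb{x})-\varphi\bigl(\mb{a}^\top\mb{x}-b\bigr).
\end{equation*}
Because the single equality constraint $\mb{a}^\top\mb{x}=b$ is affine with $\mb{a}\neq\mb{0}$, the linear independence constraint qualification holds trivially, so the KKT conditions are necessary for optimality; moreover, because $F$ is convex and the constraint is affine, KKT is also sufficient. Applying $\nabla_{\mb{x}}L(\mb{x}^*,\varphi^*)=0$ yields $\nabla F(\mb{x}^*)=\varphi^*\mb{a}$, which is the desired identity with $\bm{\varphi}^*\in\mathbb{R}$ acting as the Lagrange multiplier.

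There is no real obstacle here; the only delicate points are (a) ruling out pathologies by assuming $\mb{a}\neq\mb{0}$ and that each $f_i$ grows fast enough for a minimizer to exist on the hyperplane, and (b) invoking the sufficiency of KKT for this affinely constrained convex program rather than just necessity, so that $\varphi^*$ is guaranteed to exist as a real scalar. Both are standard and can be dispatched in one or two lines each in the formal write-up.
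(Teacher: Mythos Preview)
Your proposal is correct and follows essentially the same approach as the paper: the paper's proof simply cites the KKT conditions from \cite{boyd2006optimal} for the case $a_i=1$ and notes that the general $\mb{a}$ follows by the same methodology. Your write-up is more explicit about existence/uniqueness from strict convexity and about constraint qualification, but the core argument---KKT stationarity for an affinely constrained convex program yielding $\nabla F(\mb{x}^*)=\varphi^*\mb{a}$---is identical.
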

\begin{proof}
	The proof for $a_i =1$ follows the KKT condition as described in \cite{boyd2006optimal}. Adopting a similar methodology one can generalize the proof for general vector $\mb{a}$.
\end{proof}

The distributed resource allocation solution is defined over a communication network (or data-sharing network). In this paper, we assume a uniformly connected network that is connected over every finite time-interval $B$ (also referred to as the B-connectivity condition). This is motivated by the fact that mobile robotic networks and multi-agent systems may lose connectivity at some time-instants and regain connectivity at some other times. This is in contrast to many existing literature that requires all-time network connectivity for convergence, e.g., see \cite{cherukuri2015distributed,boyd2006optimal,shames2011accelerated,turan2020resilient}. In this context, this work advances state-of-the-art by relaxing the network connectivity condition for distributed resource allocation.

In the case of box constraints in the form $x_{min}<x_i<x_{max}$ one can use extra additive penalty or barrier functions in the objective to address these local constraints. Different penalizing functions for box constraints are proposed in the literature, e.g., see \cite{nesterov1998introductory,bertsekas1975necessary,smith1997penalty}.

\textit{Our contribution} is to design distributed algorithms to solve the above DRA problem locally with robust algorithms that can compensate for the effect of noise and disturbances over the communication network. Our proposed solution converges to the optimal resource allocation with a certain optimality gap in the presence of  noise. This optimality gap is due to the effect of noise.
The solution is proposed based on the signum function and further we improve its convergence rate with some additive terms. The solution is described in the next section.

\section{The proposed Sign-based Robust Solution} \label{sec_solution}
Motivated by the nonlinear consensus algorithms, we propose a general nonlinear sign-based dynamics to solve problem \eqref{eq_dra}. The proposed general solution is as follows:
\begin{align}
	\dot{x}_i = -\frac{1}{a_i}\sum_{j \in \mc{N}_i} W_{ij} sgn\Big(\frac{\partial f_i(x_i)}{a_i} -  \frac{\partial f_j(x_j)}{a_j} + \nu_{ij} \Big),
	\label{eq_sol}
\end{align}
with  $W_{ij}$ as the $(i,j)$ link weight, $\nu_{ij}$ as the additive noise over this link, $\mc{N}_i$ as the neighborhood of node $i$, $\partial f_i(x_i)$ as the gradient of cost function $f_i$  with respect to $x_i$, and the sign function defined as
\begin{align} \label{eq_sgn}
	sgn(x) = \frac{x}{|x|}.
\end{align}
Note that sign function $sgn(\cdot): \mathbb{R} \mapsto \mathbb{R}$ is a nonlinear and odd mapping, i.e., $sgn(-x) = - sgn(x)$ and $sgn(x)=1>0$ for $x>0$, $sgn(x)=-1<0$ for $x<0$, and $sgn(x)=0$ for equilibrium $x=0$. In the presence of random additive noise $\nu$ the sign function works as follows:
\begin{align} \label{eq_sgn_noise}
	sgn(x+\nu) = \left\{ \begin{array}{ll}
		sgn(x), & \text{for}~  |x|>|\nu|,\\
		\pm sgn(x), & \text{for}~|x|\leq |\nu|.
	\end{array}\right.
\end{align}
The above implies that, in the presence of random noise, for small values of $|x|$ (near the equilibrium) the noise-corrupted sign function fluctuates between $\pm 1$. This results in some optimality gap as the algorithm decays toward the equilibrium. For points far from the equilibrium the evolution of the noise-corrupted dynamics is the same as the evolution of the noise-free dynamics. This justifies the use of the sign function to design resilient and robust-to-noise algorithms.

We make the following assumption on the links' weights:
\begin{ass} \label{ass_W}
The link weights are symmetric and positive, i.e., $W_{ij}=W_{ji}\geq0$ for $\forall i,j \in \{1,\dots,n\}$.
\end{ass}

The node/agent $i$ uses its local gradient information in its neighbourhood $\mc{N}_i$ and shares its objective gradient $\partial f_i(x_i)$ with the neighbours $j \in \mc{N}_i$. This justifies that the proposed dynamics \eqref{eq_sol} is a distributed and localized algorithm. Note that the dynamics \eqref{eq_sol} only constrain the weight matrix $W$ to be symmetric, and in contrast to \cite{rikos2021optimal,falsone2020tracking,falsone2017dual}, the weight matrix $W$ is not needed to be stochastic.

The convergence rate of the sign-based dynamics \eqref{eq_sol} is slow. To improve its convergence rate we modify the solution by some additive terms motivated by the fixed-time optimization protocols \cite{garg2020fixed,dai2022consensus}. The new \textit{accelerated} dynamics is as follows:
\begin{align} \nonumber
	\dot{x}_i = -\frac{1}{a_i}\sum_{j \in \mc{N}_i} &W_{ij}\Bigl( sgn\Big(\frac{\partial f_i(x_i)}{a_i} -  \frac{\partial f_j(x_j)}{a_j} + \nu_{ij} \Big)\\ &+  sgn^{\mu_1}\Big(\frac{\partial f_i(x_i)}{a_i} -  \frac{\partial f_j(x_j)}{a_j} + \nu_{ij} \Big) \nonumber \\ &+  sgn^{\mu_2}\Big(\frac{\partial f_i(x_i)}{a_i} -  \frac{\partial f_j(x_j)}{a_j} + \nu_{ij} \Big)\Bigr),
	\label{eq_sol_fixed}
\end{align}
with $0<\mu_1<1$ and  $\mu_2>1$. The function
$\mbox{sgn}^\mu(x)$ is defined as:
\begin{eqnarray}
	sgn^\mu(x)=x|x|^{\mu-1},
\end{eqnarray}

Next, we review some properties of the proposed solution. The first property is anytime-feasibility saying that the resource-demand balance holds at all evolution times of the dynamics. This holds for both dynamics \eqref{eq_sol} and \eqref{eq_sol_fixed}.

\begin{lem} \label{lem_feasible_intime}
    Assume that the nodes' states are initialized with feasible values, i.e., $\mb{a}^\top \mb{x}(0) = b$. In the absence of noise, the nodes' state values remain feasible throughout the evolution of system dynamics \eqref{eq_sol} and \eqref{eq_sol_fixed}, i.e., $\mb{a}^\top \mb{x}(t) = b$ for $t>0$.	
\end{lem}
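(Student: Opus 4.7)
The plan is to show $\frac{d}{dt}(\mathbf{a}^\top \mathbf{x}(t)) = 0$, so that the initial feasibility $\mathbf{a}^\top \mathbf{x}(0) = b$ is preserved for all $t > 0$. The key observation is that after multiplying $\dot{x}_i$ by $a_i$, the prefactor $1/a_i$ in the dynamics \eqref{eq_sol} cancels, leaving a sum that is antisymmetric under the swap $i \leftrightarrow j$ thanks to the oddness of $\mathrm{sgn}(\cdot)$ and the symmetry of $W_{ij}$ granted by Assumption \ref{ass_W}.

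Concretely, I would first write, using \eqref{eq_sol} with $\nu_{ij} = 0$,
\begin{align*}
\frac{d}{dt}(\mathbf{a}^\top \mathbf{x}) &= \sum_{i=1}^n a_i \dot{x}_i \\
&= -\sum_{i=1}^n \sum_{j \in \mathcal{N}_i} W_{ij}\, sgn\Big(\tfrac{\partial f_i(x_i)}{a_i} - \tfrac{\partial f_j(x_j)}{a_j}\Big).
\end{align*}
Then I would relabel the dummy indices in the double sum by swapping $i$ and $j$, obtaining
\begin{align*}
\sum_{i,j} W_{ij}\, sgn\Big(\tfrac{\partial f_i(x_i)}{a_i} - \tfrac{\partial f_j(x_j)}{a_j}\Big) = \sum_{i,j} W_{ji}\, sgn\Big(\tfrac{\partial f_j(x_j)}{a_j} - \tfrac{\partial f_i(x_i)}{a_i}\Big).
\end{align*}
Using $W_{ij} = W_{ji}$ (Assumption \ref{ass_W}) and $sgn(-z) = -sgn(z)$, the right-hand side is the negative of the left-hand side, so the double sum equals zero, and hence $\frac{d}{dt}(\mathbf{a}^\top \mathbf{x}) = 0$.

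For the accelerated dynamics \eqref{eq_sol_fixed}, the same argument applies verbatim provided $sgn^{\mu}(\cdot)$ is also odd. This is immediate from the definition $sgn^{\mu}(x) = x|x|^{\mu-1}$, which gives $sgn^{\mu}(-x) = -sgn^{\mu}(x)$ for any $\mu > 0$. Thus each of the three summands in \eqref{eq_sol_fixed} vanishes after weighting by $a_i$ and summing over $i$, by the same symmetry-plus-oddness cancellation.

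I do not anticipate a real obstacle here; the only subtlety is to be careful that the double sums $\sum_{i} \sum_{j \in \mathcal{N}_i}$ are symmetric in $i,j$, which is ensured by taking $W_{ij} = 0$ for $j \notin \mathcal{N}_i$ and using the symmetry $W_{ij} = W_{ji}$ (so that $j \in \mathcal{N}_i \iff i \in \mathcal{N}_j$ whenever the weight is nonzero). The proof is essentially a one-line index-swap once this bookkeeping is in place.
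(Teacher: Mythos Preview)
Your proposal is correct and follows essentially the same approach as the paper: compute $\frac{d}{dt}(\mathbf{a}^\top \mathbf{x})$, cancel the $a_i$ prefactor, and use the symmetry $W_{ij}=W_{ji}$ from Assumption~\ref{ass_W} together with the oddness of $sgn^\mu(\cdot)$ to conclude the double sum vanishes. Your version is slightly more explicit about the index-swap and the $\mathcal{N}_i$ bookkeeping, but the argument is identical in substance.
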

\begin{proof}
	We prove the all-time feasibility for dynamics \eqref{eq_sol_fixed} and for the other one similarly follows. For the state dynamics \eqref{eq_sol_fixed} we have,
	\begin{align} \nonumber
		\frac{d}{dt}(\mb{a}^\top \mb{x}(t))=\sum_{i=1}^n a_i \dot{x}_i(t)& \\ \nonumber
		= -\sum_{i=1}^n \sum_{j \in \mc{N}_i} &W_{ij}\Bigl( sgn\Big(\frac{\partial f_i(x_i)}{a_i} -  \frac{\partial f_j(x_j)}{a_j}  \Big)\\ &+  sgn^{\mu_1}\Big(\frac{\partial f_i(x_i)}{a_i} -  \frac{\partial f_j(x_j)}{a_j}  \Big) \nonumber \\ &+  sgn^{\mu_2}\Big(\frac{\partial f_i(x_i)}{a_i} -  \frac{\partial f_j(x_j)}{a_j}  \Big)\Bigr). \label{eq_proof_feas}
	\end{align}
	Recall from Assumption~\ref{ass_W} that $W_{ij}=W_{ji}$ and $sgn^\mu(-x)=-sgn^\mu(x)$. Therefore, the summation in \eqref{eq_proof_feas} is equal to zero and  $\frac{d}{dt}(\mb{a}^\top \mb{x}(t))=0$. This implies that under the solution dynamics, the balance is preserved and is time-invariant. 		
\end{proof}

\begin{lem} \label{lem_tree}
	Let Assumptions~\ref{ass_W} hold. For any feasible initialization define the invariant set
	\begin{align}
		\mc{I} := \{\mb{x}| \mb{a}^\top \mb{x} = b , \nabla F(\mb{x}) \in \mbox{span}(\mb{a}) \}.
	\end{align}
    Then, in the absence of noise, the equilibrium under dynamics \eqref{eq_sol} and \eqref{eq_sol_fixed} is in $\mc{I}$.
\end{lem}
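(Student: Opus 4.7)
The plan is to verify the two defining conditions of $\mc{I}$ in turn. The first condition, $\mb{a}^\top\mb{x}=b$, is inherited for free from Lemma~\ref{lem_feasible_intime}: any trajectory initialized in the feasible hyperplane stays in it for all $t\geq 0$, so every equilibrium of the noise-free dynamics trivially satisfies it. The substantive content of the lemma is therefore the gradient-alignment claim $\nabla F(\mb{x})\in\mathrm{span}(\mb{a})$, which is what the rest of the argument targets.

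To reduce this alignment claim to something tractable, I would set $z_i:=\partial f_i(x_i)/a_i$ and note that $\nabla F(\mb{x})\in\mathrm{span}(\mb{a})$ is equivalent to the $z_i$'s all coinciding. The equilibrium condition $\dot{x}_i=0$, derived from either \eqref{eq_sol} or \eqref{eq_sol_fixed} (with noise set to zero), then takes the unified form
\begin{align*}
    \sum_{j\in\mc{N}_i} W_{ij}\,\Phi(z_i-z_j)=0,
\end{align*}
where $\Phi(u)=sgn(u)$ for the base dynamics and $\Phi(u)=sgn(u)+sgn^{\mu_1}(u)+sgn^{\mu_2}(u)$ for the accelerated one. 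In both cases $\Phi$ is odd, vanishes only at the origin, and shares the strict sign of its argument, which is the only structural property that the rest of the argument will exploit.

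The core step is then a maximum-element argument on the communication graph. Choose $i^\star\in\argmax_i z_i$. For every neighbor $j\in\mc{N}_{i^\star}$ we have $z_{i^\star}-z_j\geq 0$, so $\Phi(z_{i^\star}-z_j)\geq 0$, and by Assumption~\ref{ass_W} each term $W_{i^\star j}\,\Phi(z_{i^\star}-z_j)$ in the equilibrium sum is nonnegative. Since the whole sum vanishes and $\Phi(u)>0$ for $u>0$, every $j$ with $W_{i^\star j}>0$ must satisfy $z_j=z_{i^\star}$. Propagating this equality along the connected structure guaranteed by the $B$-connectivity assumption (equivalently, along a spanning subgraph of the union graph over an interval of length $B$) extends the equality to all $n$ nodes, giving $z_1=\cdots=z_n$ and hence $\nabla F(\mb{x})\in\mathrm{span}(\mb{a})$.

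The main subtlety I expect is the discontinuity of $sgn$ at zero: the equilibrium identity must be read in the Filippov set-valued sense, and I will have to check that the strict positivity of $\Phi(u)$ for $u>0$ still forces neighbors with $z_j<z_{i^\star}$ to be absent, while neighbors with $z_j=z_{i^\star}$ contribute values drawn from the set-valued image of $\Phi(0)$ that sum to zero. A secondary point is that the neighborhoods $\mc{N}_i$ may be time-varying under $B$-connectivity, so the propagation step should be executed on the union graph across a $B$-window rather than on any single snapshot, but this does not change the conclusion.
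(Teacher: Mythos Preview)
Your proposal is correct. The paper's own proof is essentially a one-line reduction: it observes that $W$ is symmetric and that $sgn^\mu$ is odd and sign-preserving, and then invokes \cite[Theorem~1]{ccta22} as a black box. You instead supply a self-contained argument, introducing $z_i=\partial f_i(x_i)/a_i$ and running a maximum-element propagation over the (union) graph to force $z_1=\cdots=z_n$. The structural hypotheses you isolate---oddness, sign preservation, $W_{ij}\geq 0$---are exactly those the paper names before citing the external theorem, so the two proofs rest on the same ingredients; yours simply unpacks what the cited theorem would do. The benefit of your route is that the reader does not need access to \cite{ccta22}; the benefit of the paper's route is brevity.

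Two minor remarks. First, your Filippov concern is moot here because the paper explicitly adopts the single-valued convention $sgn(0)=0$ (see the sentence following \eqref{eq_sgn}), so the equilibrium equation $\sum_{j\in\mc{N}_i}W_{ij}\Phi(z_i-z_j)=0$ can be read pointwise and your max argument goes through without set-valued analysis. Second, your handling of $B$-connectivity is the right one: an equilibrium must satisfy $\dot{x}_i(t)=0$ for all $t$, so the maximizing set is closed under neighborhoods in every snapshot and hence under the union graph over any $B$-window, which is connected by assumption.
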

\begin{proof}
	The proof follows from symmetry in the weight matrix and the fact that $sgn^\mu(x)$ is an odd sign-preserving mapping. This implies that the conditions in \cite[Theorem~1]{ccta22} hold and the proof of the lemma follows.
\end{proof}

\begin{lem} \label{lem_sum}
	\cite{garg2019fixed2}
	Consider a nonlinear mapping $g(\mb{x}) $ and symmetric matrix $W$. Then, for any $\bm{\psi} \in \mathbb{R}$ the following holds,
	\begin{eqnarray} \nonumber
		\sum_{i =1}^n \bm{\psi}_i \sum_{j =1}^nW_{ij}g(\bm{\psi}_j-\bm{\psi}_i)= \sum_{i,j =1}^n \frac{W_{ij}}{2} (\bm{\psi}_j-\bm{\psi}_i)g(\bm{\psi}_j-\bm{\psi}_i).
	\end{eqnarray}
\end{lem}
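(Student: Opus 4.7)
The plan is a standard symmetrization argument on the double sum, leaning on two symmetries: the hypothesis $W_{ij} = W_{ji}$ from Assumption~\ref{ass_W}, and the odd symmetry $g(-x) = -g(x)$, which is implicit in the setting since $g$ is intended to be one of the sign-preserving maps $sgn$ or $sgn^{\mu}$ appearing in \eqref{eq_sol} and \eqref{eq_sol_fixed}. I would begin by rewriting the left-hand side as a single double sum, $S := \sum_{i,j=1}^{n} \bm{\psi}_i W_{ij}\, g(\bm{\psi}_j - \bm{\psi}_i)$, and then produce an alternative representation by exchanging the dummy indices $i \leftrightarrow j$.

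After the swap, the same quantity reads $S = \sum_{i,j=1}^{n} \bm{\psi}_j W_{ji}\, g(\bm{\psi}_i - \bm{\psi}_j)$. Applying $W_{ji} = W_{ij}$ and $g(\bm{\psi}_i - \bm{\psi}_j) = -g(\bm{\psi}_j - \bm{\psi}_i)$ turns this into $S = -\sum_{i,j=1}^{n} \bm{\psi}_j W_{ij}\, g(\bm{\psi}_j - \bm{\psi}_i)$. Averaging this equality with the original expression of $S$ groups the $\bm{\psi}_i$ and $\bm{\psi}_j$ factors into a single $(\bm{\psi}_i - \bm{\psi}_j)$, giving $S = \tfrac{1}{2}\sum_{i,j} W_{ij}\,(\bm{\psi}_i - \bm{\psi}_j)\, g(\bm{\psi}_j - \bm{\psi}_i)$, which matches the right-hand side of the lemma after rewriting $(\bm{\psi}_i - \bm{\psi}_j) = -(\bm{\psi}_j - \bm{\psi}_i)$.

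I do not anticipate a substantive obstacle: the derivation is a one-line index swap followed by averaging, with no limits, regularity, or continuity concerns in play. The one bookkeeping point worth checking is the overall sign; a quick sanity check on $n = 2$ with $g(x) = x$ pins down the convention and shows that this derivation lands on a minus sign relative to what is typeset, a discrepancy I would reconcile against the sign convention inherited from \cite{garg2019fixed2}. Beyond that, the statement is purely algebraic, and its value for the rest of the paper is that when $g$ is sign-preserving each summand $(\bm{\psi}_j - \bm{\psi}_i)\, g(\bm{\psi}_j - \bm{\psi}_i)$ is nonnegative, which is exactly what is needed to establish a Lyapunov-type decrease in the convergence analysis of the sign-based dynamics \eqref{eq_sol} and \eqref{eq_sol_fixed}.
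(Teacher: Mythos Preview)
Your symmetrization argument is correct and is exactly the standard derivation of this identity; the paper itself supplies no proof of Lemma~\ref{lem_sum} but merely cites \cite{garg2019fixed2}, so there is nothing further to compare. Your observation about the overall sign is also accurate: with $g$ odd and $W$ symmetric one obtains $\sum_{i,j}\bm{\psi}_i W_{ij}\,g(\bm{\psi}_j-\bm{\psi}_i)=-\tfrac{1}{2}\sum_{i,j}W_{ij}(\bm{\psi}_j-\bm{\psi}_i)\,g(\bm{\psi}_j-\bm{\psi}_i)$, so the identity as typeset carries a sign slip that is silently absorbed when the lemma is invoked in the proof of Theorem~\ref{thm_converg}.
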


We first prove convergence and stability in the absence of noise and then extend the results to the noise-corrupted case.

\begin{theorem} \label{thm_converg}
	Under Assumption~\ref{ass_W} and feasible initialization, the robust dynamics \eqref{eq_sol} and \eqref{eq_sol_fixed} converges to the invariant set $\mc{I}$ in Lemma~\ref{lem_tree} and converges toward the optimizer of problem~\ref{eq_dra}.
\end{theorem}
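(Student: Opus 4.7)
The plan is to use a Lyapunov argument with $V(\mathbf{x}) := F(\mathbf{x}) - F(\mathbf{x}^*)$, where $\mathbf{x}^*$ is the unique constrained optimizer supplied by Lemma \ref{lem_optimal_solution}. Strict convexity of $F$ restricted to the affine feasible set $\{\mathbf{x}:\mathbf{a}^\top\mathbf{x}=b\}$ makes $V$ non-negative on that set, with $V(\mathbf{x})=0$ only at $\mathbf{x}^*$, and by Lemma \ref{lem_feasible_intime} every noise-free trajectory stays on the feasible set, so $V$ is a legitimate Lyapunov candidate along the flow.

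Next I would differentiate $V$ along the dynamics. Setting $\psi_i := \partial f_i(x_i)/a_i$, so that $\partial f_i(x_i) = a_i\psi_i$, a direct computation for \eqref{eq_sol} gives
\begin{align*}
\dot V(\mathbf{x}) = \sum_{i=1}^{n} \partial f_i(x_i)\,\dot x_i = -\sum_{i=1}^{n}\psi_i\sum_{j\in\mathcal{N}_i}W_{ij}\,sgn(\psi_i-\psi_j).
\end{align*}
Symmetry of $W$ (Assumption \ref{ass_W}) together with Lemma \ref{lem_sum} applied with $g=sgn$, plus oddness of $sgn$, collapses this into
\begin{align*}
\dot V(\mathbf{x}) = -\tfrac{1}{2}\sum_{i,j}W_{ij}\,|\psi_i-\psi_j|\;\le\;0.
\end{align*}
For the accelerated dynamics \eqref{eq_sol_fixed} the same manipulation applies term by term, since $sgn^{\mu}$ is odd and sign-preserving ($x\,sgn^{\mu}(x)=|x|^{1+\mu}\ge 0$), producing two additional non-positive contributions proportional to $|\psi_i-\psi_j|^{1+\mu_1}$ and $|\psi_i-\psi_j|^{1+\mu_2}$.

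I would then invoke LaSalle's invariance principle. Sublevel sets of $V$ intersected with the affine feasible set are compact (strict convexity of $F$ together with existence of the constrained minimizer), so trajectories are bounded and must converge to the largest invariant set contained in $\{\dot V=0\}$. The displayed expression for $\dot V$ vanishes iff $\psi_i=\psi_j$ across every edge with $W_{ij}>0$; under network connectivity this forces all $\psi_i$ to coincide, i.e., $\nabla F(\mathbf{x})\in\mbox{span}(\mathbf{a})$, which together with feasibility is precisely the invariant set $\mathcal{I}$ of Lemma \ref{lem_tree}. Finally, Lemma \ref{lem_optimal_solution} identifies the single element of $\mathcal{I}$ as the optimizer $\mathbf{x}^*$, yielding convergence to $\mathbf{x}^*$.

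The main obstacle I anticipate is the uniform ($B$-)connectivity framework emphasized in Section \ref{sec_fram}: at any given instant the graph may be disconnected, so the pointwise implication ``$\dot V=0$ forces all $\psi_i$ to agree'' fails. I would address this by a windowed argument, integrating $\dot V$ over successive length-$B$ intervals and using the fact that the union graph on each window is connected to obtain a strict Lyapunov decrement whenever the $\psi_i$'s disagree; a Barbalat / LaSalle-for-switched-systems argument would then yield asymptotic equalization of the $\psi_i$'s and recover convergence to $\mathcal{I}$, and hence to $\mathbf{x}^*$.
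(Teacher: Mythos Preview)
Your proposal is correct and follows essentially the same approach as the paper: the same Lyapunov candidate $F(\mathbf{x})-F^*$, the same use of Lemma~\ref{lem_sum} together with oddness and sign-preservation of $sgn^{\mu}$ to obtain $\dot V\le 0$, and the same appeal to LaSalle's invariance principle. In fact you go further than the paper, which handles only the instantaneously-connected case in its proof and does not carry out the windowed argument for $B$-connectivity that you sketch in your final paragraph.
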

\begin{proof}
	Consider the residual function  $\overline{F}(\mb{x})=F(\mb{x})-F^*>0$ as  the Lyapunov function (with optimal value $F^*:=F(\mb{x}^*)$). Note that this function satisfies the conditions for a valid Lyapunov function, i.e., $\dot{\overline{F}}(\mb{x}^*) = 0$ and $\mb{x}^* \in \mc{I}$ is the equilibrium of $\overline{F}(\mb{x})$. To prove convergence we need to show that $\dot{\overline{F}}(\mb{x})$ decreases under dynamics \eqref{eq_sol} and \eqref{eq_sol_fixed}.
	\begin{align}
		\dot{\overline{F}}(\mb{x}) &=  \sum_{i =1}^n \partial f_i(x_i) \dot{x}_i\nonumber \\
		&= \sum_{i =1}^n -\frac{\partial f_i(x_i)}{a_i}\sum_{j \in \mc{N}_i} W_{ij} g\Big(\frac{\partial f_i(x_i)}{a_i} -  \frac{\partial f_j(x_j)}{a_j}\Big)\Bigr). \nonumber
	\end{align}
	We prove for dynamics \eqref{eq_sol_fixed} and the proof for dynamics \eqref{eq_sol} similarly follows. From Lemma~\ref{lem_sum},
	\begin{align} \nonumber
	\dot{\overline{F}}(\mb{x}) = -\sum_{i,j =1}^n W_{ij}&\Big(\frac{\partial f_i(\mb{x}_i)}{a_i} -  \frac{\partial f_j(\mb{x}_j)}{a_j}\Big) \\ \nonumber &\Bigl( sgn\Big(\frac{\partial f_i(x_i)}{a_i} -  \frac{\partial f_j(x_j)}{a_j}  \Big)\\ &+  sgn^{\mu_1}\Big(\frac{\partial f_i(x_i)}{a_i} -  \frac{\partial f_j(x_j)}{a_j}  \Big) \nonumber \\ &+  sgn^{\mu_2}\Big(\frac{\partial f_i(x_i)}{a_i} -  \frac{\partial f_j(x_j)}{a_j}  \Big)\Bigr).
\end{align}
	Recall that $sgn^{\mu}(x)$ is an odd sign-preserving function, we have $x sgn^{\mu}(x)\geq0$. This implies that
	 $\dot{\overline{F}}(\mb{x}) \leq0$ with strict equality at the invariant set $\mc{I}$. From LaSalle's invariance principle \cite{nonlin}, the proof follows.
\end{proof}

In the presence of noise $\nu_{ij}$ the solution converges to the proximity of the optimizer $\mb{x}^*$. $\varepsilon$-neighborhood of the optimizer is also the outcome of the chattering phenomena due to the non-Lipschitz nature of the proposed dynamics. From the discretized version of the dynamics~\eqref{eq_sol} we have,
\begin{align} \nonumber
	| x_i (t&+dt) - x_i(t) | \\ \nonumber
	&=\left\lvert -\frac{dt}{a_i}\sum_{j \in \mc{N}_i} W_{ij} sgn\Big(\frac{\partial f_i(x_i)}{a_i} -  \frac{\partial f_j(x_j)}{a_j} + \nu_{ij} \Big) \right\rvert \\ \label{eq_epsil}
	&\leq\frac{dt}{a_i} \sum_{j \in \mc{N}_i} W_{ij}
\end{align}
with $dt$ as the discretization time-step. In the chattering region, we have $|x_i (t+dt) |=|x_i(t)|=\varepsilon$, and one can compute the $\varepsilon$-neighborhood  bound as,
\begin{align} \label{eq_termination}
\varepsilon \leq \frac{dt}{2\min(a_i)} \sum_{j \in \mc{N}_i} W_{ij}.
\end{align}
Therefore, the $
\varepsilon$-proximity bound in the neighborhood of the optimizer  $\mb{x}^*$ can be defined. This gives the optimality gap of the dynamics~\eqref{eq_sol} (and similarly for dynamics~\eqref{eq_sol_fixed}).
Note that larger discrete step size $dt$ results in larger chattering and optimality gap. In other words, smaller $dt$ implies slower convergence with a smaller optimality gap around $\mb{x}^*$. As a remedy, to reduce the chattering effect, some works suggest adopting \textit{saturation}-based dynamics, for example, see \cite{ccta22}. In these works the solution converges in \textit{finite-time} to a proximity or $\varepsilon$-neighborhood of $\mb{x}^*$, while inside this proximity the solution slowly and asymptotically converges towards the optimizer.

The Algorithm~\ref{alg_ac} summarizes the proposed resource allocation solution.
Note that the termination criteria is defined, for example, as the time at which the average of states (or the sum of states) starts changing and is not constant anymore. This implies that the solution starts to violate the feasibility condition and, thus, the algorithm needs to stop iterating.
Before reaching the termination criteria the noise has no effect on the convergence toward the optimal point. This is a consequence of using sign function that removes the effect of noise at the regions not close to the equilibrium. After reaching this criteria the algorithm terminates and the computed resources are assigned to the agents. \textit{This follows the all-time feasibility of the solution before termination-time, implying that at the termination time the resource-demand balance holds.} Inside this region noise affects the convergence and optimality. This implies that in the presence of noise, the optimality gap is $\varepsilon$ and we cannot get closer to the $\mb{x}^*$. Note that, to compute the sum or average of local costs, there exists \textit{dynamic average consensus} algorithms that finds the sum/average of some time-varying reference signals in a decentralized way by sharing only local data, see for example \cite{kia2019tutorial,spanos2005dynamic,nosrati2012dynamic}. Typically, these algorithms are based on double time-scale scenarios with some inner loops for extra communications to deal with dynamic nature of the parameters.

\begin{algorithm}
	\KwData{Input:  $W$, $\mc{N}_i$, $f_i(\cdot)$}
	\KwResult{ Optimal point and optimal cost}
	{\textbf{Initialization:} $t=0$ and random feasible initialization}
	\While{termination criteria NOT hold\;
	}{Node $i$ receives $\partial f_j (x_j(t))$ and $f_j (x_j(t))$ from neighbor nodes $j\in \mc{N}_i$\;
		Node $i$ updates $x_i(t)$ via dynamics~\eqref{eq_sol}\;
		Node $i$ shares $\partial f_i (x_i(t))$ and $f_i (x_i(t))$ with neighbour nodes $j \in \mc{N}_i$\;
	}
	\caption{\textsf{The Robust Resource Allocation Algorithm}} \label{alg_ac}
\end{algorithm}
\section{Simulations} \label{sec_sim}
In this section, we verify our theoretical results with some academic simulations.
We first consider the noise-free case to compare the convergence with some existing literature.  Consider a group of $n=40$ nodes/agents to be optimally assigned resources under the following non-quadratic cost function:
\begin{align}
	f_i^\sigma (\mb{x}_i) = f_i(\mb{x}_i) + h (\mb{x}_i - x_{max}) + h(x_{min} - \mb{x}_i)
\end{align}
with $h(u) := \max{(u,0)}$, $x_{max}=100$, $x_{min}=20$ as the box constraints and $f_i(x_i) = g_ix_i^2+c_ix_i$.
The parameters of the cost function are chosen randomly in the range: $g_i \in (0,0.05]$, $c_i \in (0,5]$.
The multi-agent network is considered a random network of Erdos-Renyi type with $20\%$ linking probability. The demand value is set equal to $b=2500$ and the weighting factor is set equal to $a_i = 1$. The evolution of residual error $\overline{F}$ (the Lyapunov function) for different existing resource allocation dynamics are compared in Fig.~\ref{fig_compare}. To give more details, linear dynamics \cite{boyd2006optimal}, accelerated linear solution \cite{shames2011accelerated}, finite-time convergent solution \cite{wang2020distributed}, and saturation-based dynamics \cite{ccta22} are given for comparison. We set the parameters $\mu_1=2$ and $\mu_2 = 0.4$ for accelerated dynamics~\eqref{eq_sol_fixed}.
\begin{figure}[]
	\centering
	\includegraphics[width=2.6in]{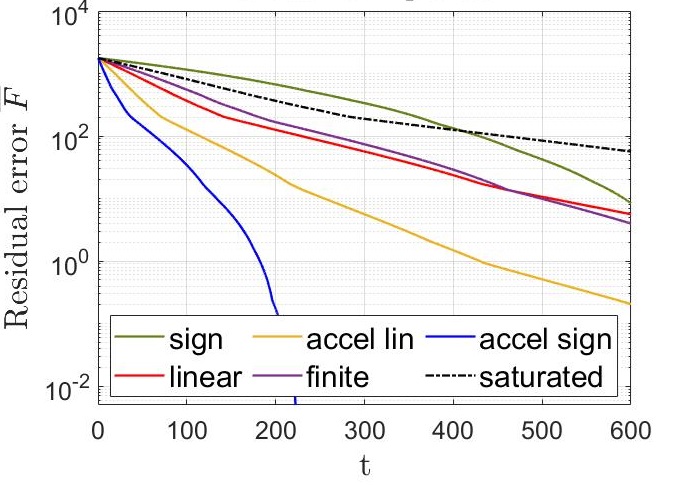}
	\caption{This figure shows a comparative simulation of different existing solutions with our proposed sign-based and accelerated sign-based dynamics. }
	\label{fig_compare}
\end{figure}

Next, we move to noise-corrupted dynamics to check the performance in the presence of random noise. We consider Gaussian distribution random noise $\nu_{ij} = \mc{N}(0,0.001)$ corrupting the dynamics \eqref{eq_sol}. We consider a similar setup as in the previous simulation. The time evolution of agents' states and the residual error is shown in Fig.~\ref{fig_noise}. The algorithm terminates with some optimality gap at time $329$. Before this time (the termination criteria) the noise has minor effects at the convergence and optimality, and the states move toward the optimizer and the residual is decaying. However, after this termination criteria, the states' evolution gets corrupted by noise and the state values fluctuate as can be seen from the figure. Before this point the average of states (representing the constraint feasibility) is constant. This implies all-time feasibility. After this termination point, the constraint feasibility does not hold anymore as the average of states changes due to the effect of noise. This implies that after the termination criteria, the algorithm loses its all-time feasibility.

\begin{figure}[]
	\centering
	\includegraphics[width=2.6in]{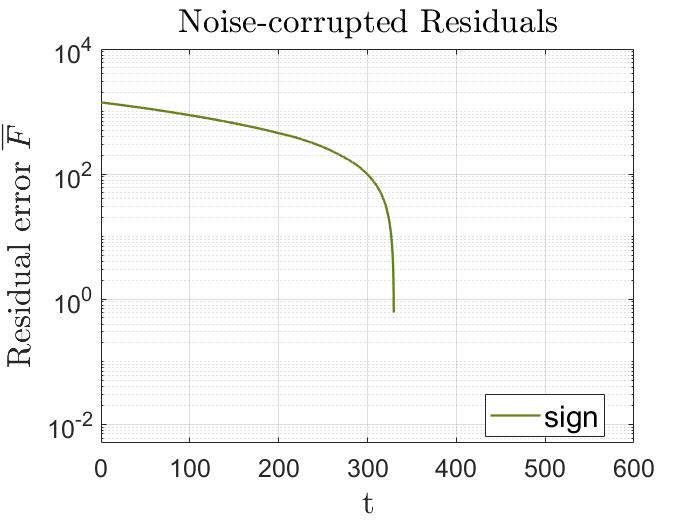}
	\includegraphics[width=2.6in]{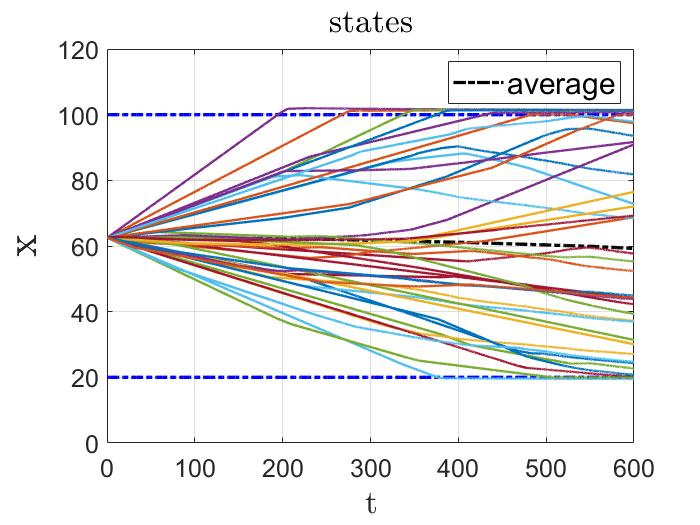}
	\caption{The evolution of the residual and all states under sign-based dynamics~\eqref{eq_sol} under noisy setups: The algorithm terminates before the exact convergence and converges with some optimality gap to the proximity of the optimizer (note that resource-demand feasibility holds up-to termination time). }
	\label{fig_noise}
\end{figure}

\section{Conclusion and Future Works} \label{sec_con}
This work proposes distributed (or networked) sign-based dynamics to solve resource allocation in noisy setups with a certain optimality gap. We further improve the convergence rate of the solution by proposing accelerated dynamics containing additive nonlinear terms. The proposed solution converges over general uniformly-connected networks over which the agents may even lose connection with other agents at some points while regaining the network connectivity over finite time intervals. The solution preserves all-time feasibility in regions far from the equilibrium, which is another privilege over the existing dual-based formulations, e.g., scenarios based on alternating-direction-method-of-multipliers (ADMM) \cite{jiang2022distributed,cdc22,jian2019distributed,farakhor2023distributed}.

Extending the results to consider packet drops and link removal over unreliable networks \cite{icrom2022} is one direction of future research. Convergence in the presence of time delay over the multi-agent network \cite{OJCSYS,scl23} is another promising research direction. Addressing privacy \cite{rikos2022distributed} and security \cite{wu2023distributed} concerns is also an interesting research direction. Other applications such as network utility maximization problem \cite{nekouei2016convergence,tsai2022network} are also of interest.

\bibliographystyle{IEEEbib}
\bibliography{bibliography}

\end{document}